%-----------------------------------------------------------------------
\documentclass[11pt,twoside]{article}
%-----------------------------------------------------------------------

%-----------------------------------------------------------------------

\usepackage{fullpage}

\setlength{\textwidth}{\paperwidth}
\addtolength{\textwidth}{-6cm}
\setlength{\textheight}{\paperheight}
\addtolength{\textheight}{-4cm}
\addtolength{\textheight}{-1.1\headheight}
\addtolength{\textheight}{-\headsep}
\addtolength{\textheight}{-\footskip}
\setlength{\oddsidemargin}{0.5cm}
\setlength{\evensidemargin}{0.5cm}

\usepackage{amsthm, amsmath, amssymb, dsfont, mathtools,bm}
%\usepackage{framed}
%usepackage{verbatim}
%\usepackage{caption,subcaption,graphicx,rotating,multirow}
%\usepackage{url}
%\usepackage{color,xcolor}
%\usepackage{enumitem}
%\usepackage{epstopdf}

%% Proof macros
\theoremstyle{plain}
\newtheorem{theorem}{Theorem}

\newtheorem{lemma}[theorem]{Lemma}

\newtheorem{proposition}[theorem]{Proposition}
 \newenvironment{proofof}[1]{{\bf {\em Proof of #1.}}}{\hfill \rule{2mm}{2mm} %\qed 
 }
  
%\newtheorem{theorem}{Theorem}[section]
%\newtheorem{proposition}{Proposition}[section]
%\newtheorem{definition}{Definition}[section]
%\newtheorem{lemma}{Lemma}[section]
%\newtheorem{corollary}{Corollary}[theorem]
%\newtheorem{assumption}{Assumption}[section]
% \newtheorem{conjecture}[theorem]{Conjecture}
%\newtheorem{remark}[subsection]{Remark}
%\newtheorem{remarks}[subsection]{Remarks}
%\newtheorem{example}[subsection]{Example}
 %\newenvironment{proof}{\noindent {\em Proof:}}{\\\hspace*{\fill}\mbox{$\diamond$}}

% \newenvironment{proofof}[1]{{\em Proof of #1.}}{\hfill%\rule{2mm}{2mm}
%\qed}
 
%%%%%%%%%%%%%%%%%%%%%%%%%%%%%%%%%%%%%%%%%%%%%%%%%%%%%%%%%%%%%%%%%%%%%%%
% WIDEBAR COMMAND
\newlength{\widebarargwidth}
\newlength{\widebarargheight}
\newlength{\widebarargdepth}
\DeclareRobustCommand{\widebar}[1]{%
  \settowidth{\widebarargwidth}{\ensuremath{#1}}%
  \settoheight{\widebarargheight}{\ensuremath{#1}}%
  \settodepth{\widebarargdepth}{\ensuremath{#1}}%
  \addtolength{\widebarargwidth}{-0.3\widebarargheight}%
  \addtolength{\widebarargwidth}{-0.3\widebarargdepth}%
  \makebox[0pt][l]{\hspace{0.3\widebarargheight}%
    \hspace{0.3\widebarargdepth}%
    \addtolength{\widebarargheight}{0.3ex}%
    \rule[\widebarargheight]{0.95\widebarargwidth}{0.1ex}}%
  {#1}}
  
%%% New version of \caption puts things in smaller type, single-spaced 
%%% and indents them to set them off more from the text.
\makeatletter
\long\def\@makecaption#1#2{
        \vskip 0.8ex
        \setbox\@tempboxa\hbox{\small {\bf #1:} #2}
        \parindent 1.5em  %% How can we use the global value of this???
        \dimen0=\hsize
        \advance\dimen0 by -3em
        \ifdim \wd\@tempboxa >\dimen0
                \hbox to \hsize{
                        \parindent 0em
                        \hfil 
                        \parbox{\dimen0}{\def\baselinestretch{0.96}\small
                                {\bf #1.} #2
                                %%\unhbox\@tempboxa
                                } 
                        \hfil}
        \else \hbox to \hsize{\hfil \box\@tempboxa \hfil}
        \fi
        }
\makeatother

% Some vector/matrix norms

% One half

% Inner product

% Kullback-Leibler

% Normal / Bernoulli distribution
\newcommand{\normal}{\ensuremath{\mathcal{N}}}

% asymptotic order / comparison
\newcommand{\bigo}{\ensuremath{\mathcal{O}}}

% Probability / Expectation
\renewcommand{\P}{\operatorname{\mathbb{P}}}
\newcommand{\E}{\operatorname{\mathbb{E}}}

%Eigenvector / eigenvalue related notation

%% Integer Real / complex numbers  and other ...
%\newcommand{\N}{\mathbb{N}}
\newcommand{\Z}{\mathbb{Z}}

\newcommand{\R}{\mathbb{R}}

\newcommand{\Rp}{\mathbb{R}_{+}}

% optimization

%indicator
\newcommand{\indi}{\mathds{1}}

% imaginary number

% limits

\newcommand{\rmd}{\mathrm{d}}

%% MS macros %%

\newcommand{\abs}[1]{\left|#1\right|}
\newcommand{\vct}[1]{\bm{#1}}
\newcommand{\mtx}[1]{\bm{#1}}

%% Environment macros %%
\def\BC{\begin{center}}
\def\EC{\end{center}}
\def\BIT{\begin{itemize}}
\def\EIT{\end{itemize}}
\def\BET{\begin{enumerate}}
\def\EET{\end{enumerate}}
\def\BEQ{\begin{equation}}
\def\EEQ{\end{equation}}

%% COMMENTING commands

\long\def\comment#1{}

% MACRO to keep aspect ratio (avoids weird stretching)

%\newcommand{\half}{\frac{1}{2}}

%%

 % Signal-to-Noise Ratio
\newcommand{\KL}{\mathsf{KL}}
\newcommand{\RS}{\mathsf{RS}}

\newcommand{\x}{\textup{\texttt{x}}}

\begin{document}

\title{\Large{\bf{Estimation in the spiked Wigner model: \\  A short proof of the replica formula}}}

\author{Ahmed El Alaoui\thanks{Department of EECS, UC Berkeley, CA. Email: elalaoui@berkeley.edu}
\and
Florent Krzakala\thanks{Laboratoire de Physique Statistique, CNRS, PSL Universit\'es \& Ecole Normale Sup\'erieure, Sorbonne Universit\'es et Universit\'e Pierre \& Marie Curie, Paris, France.}
}

\date{}
\maketitle

\vspace{-.3in} 
\begin{abstract} 
We consider the problem of estimating a rank-one perturbation of a Wigner matrix in a setting of low signal-to-noise ratio. This serves as a simple model for principal component analysis in high dimensions. The mutual information per variable between the spike and the observed matrix, or equivalently, the normalized Kullback-Leibler divergence between the planted and null models are known to converge to the so-called {\em replica-symmetric} formula, the properties of which determine the fundamental limits of estimation in this model.  We provide in this note a short and transparent proof of this formula, based on simple executions of Gaussian interpolations and standard concentration-of-measure arguments.  
The \emph{Franz-Parisi potential}, that is, the free entropy at a fixed overlap, plays an important role in our proof.
Furthermore, our proof can be generalized straightforwardly to spiked tensor models of even order.
\end{abstract}

\section{Introduction}
\label{sxn:intro}
Extracting low-rank information from a data matrix corrupted with noise is a fundamental statistical task. 
Spiked random matrix models have attracted considerable attention in statistics, probability and machine learning as rich testbeds for theoretical investigation on this problem~\cite{johnstone2001distribution,peche2014ICM,peche2006largest,baik2005phase}. A basic such model is the {\em spiked Wigner model}  in which one observes a rank-one deformation of a Wigner matrix ${\bm W}$:  
\begin{equation}\label{spiked_wigner_model}
\mtx{Y} = \sqrt{\frac{\lambda}{N}} \vct{x}^* \vct{x}^{*\top} + \mtx{W},
\end{equation}
where $W_{ij} = W_{ji} \sim \normal(0,1)$ and $W_{ii} \sim \normal(0,\sigma^2)$ are independent for all $1 \le i \le j \le N$. The {\em spike} vector $\vct{x}^* \in \R^N$ represents the signal to be recovered, and $\lambda \ge 0$ plays the role of a Signal-to-Noise Ratio (SNR) parameter. The entries of $\vct{x}^*$ come i.i.d.\ from a (Borel) prior $P_{\x}$ on $\R$ with bounded support, so that the scaling in the above model puts the problem in a high-noise regime where only partial recovery of the spike is possible. A basic statistical question about this model is {\em for what values of the SNR $\lambda$ is it possible to estimate the spike $\vct{x}^*$ with non-trivial accuracy?} Spectral methods, or more precisely, estimation using the top eigenvector of $\mtx{Y}$, are know to succeed above a \emph{spectral threshold} and fail below~\cite{benaych2011eigenvalues}.
Since the posterior mean is the estimator with minimal mean squared error, this question boils down to the study of the posterior distribution of $\vct{x}^*$ given $\mtx{Y}$,  which by Bayes' rule, can be written as 
\begin{equation}\label{posterior}
\rmd\P_{\lambda} (\vct{x}| \mtx{Y}) = \frac{ e^{- H(\vct{x})} \rmd P_{\x}^{\otimes N}(\vct{x})}{\int  e^{- H(\vct{x})}  \rmd P_{\x}^{\otimes N}(\vct{x})},
\end{equation} 
where $H$ is the (random) Hamiltonian
\begin{align}\label{hamiltonian}
-H(\vct{x}) &:= \sum_{i < j}  \sqrt{\frac{\lambda}{N}}Y_{ij}x_ix_j -\frac{\lambda}{2N} x_i^2x_j^2 \\
&=  \sum_{i < j}  \sqrt{\frac{\lambda}{N}}W_{ij}x_ix_j + \frac{\lambda}{N}x_ix_jx_i^*x_j^* -\frac{\lambda}{2N} x_i^2x_j^2.\nonumber
\end{align}
Let us define the free entropy\footnote{The term ``free energy" is also used, although the physics convention requires to put a minus sign in front of the expression in this case.} of the model as the expected log-partition function (i.e., normalizing constant) of the posterior $\P_{\lambda}(\cdot | \mtx{Y})$:
\begin{equation}\label{free_energy}
F_{N} = \frac{1}{N} \E \log~ \int e^{- H(\vct{x})} \rmd P_{\x}^{\otimes N}(\vct{x}),
\end{equation}
By heuristically analyzing an approximate message passing (AMP) algorithm for this problem, Lesieur et al.~\cite{lesieur2015phase} derived an asymptotic---so-called \emph{replica-symmetric} ($\RS$)---formula for the above quantity. This formula is defined as follows:   
for $r \in \Rp$, let 
\[\psi(r) := \E_{x^*,z} \log \int \exp\left(\sqrt{r}zx + rxx^* - \frac{r}{2} x^2\right) \rmd P_{\x}(x),\]
where $z \sim\normal(0,1)$ and $x^* \sim P_{\x}$ are mutually independent. 
Define the $\RS$ potential
\[F(\lambda, q) := \psi(\lambda q) -\frac{\lambda q^2}{4}.\]
The conjectured limit of~\eqref{free_energy} is the $\RS$ formula
\[\phi_{\RS}(\lambda) :=  \sup_{q \ge 0} F(\lambda,q).\]
This conjecture was then proved shortly after in a series of papers~\cite{krzakala2016mutual,barbier2016mutual,deshpande2016asymptotic,lelarge2016fundamental} (see also~\cite{korada2009exact}):
\begin{theorem}\label{RS_formula}
For all $\lambda \ge 0$,
\[\lim_{N \to \infty} F_N = \phi_{\RS}(\lambda).\]
\end{theorem}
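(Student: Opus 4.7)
I would bracket $F_N$ between matching lower and upper bounds via two Gaussian interpolations. The lower bound is a classical Guerra-type argument tailored to the planted (Nishimori) setting, while the upper bound uses the Franz-Parisi potential---the free entropy conditioned on the macroscopic overlap between $\vct{x}$ and the signal $\vct{x}^*$.

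\textbf{Lower bound.} For each $q \geq 0$ I introduce an interpolating Hamiltonian $H_t$, $t \in [0,1]$, that runs the spike model at SNR $t\lambda$ in parallel with $N$ decoupled scalar Gaussian channels at SNR $(1-t)\lambda q$ per coordinate. Let $\phi_N(t) := N^{-1}\E\log Z_t$, so that $\phi_N(0) = \psi(\lambda q)$ and $\phi_N(1) = F_N$. Gaussian integration by parts on the auxiliary noises, combined with the Nishimori identity $\E\langle R_{12}^k\rangle_t = \E\langle R_{1*}^k\rangle_t$ (which holds by the planted structure), collapses the derivative to
\[
\phi_N'(t) = \frac{\lambda}{4}\, \E\langle (R_{12}-q)^2\rangle_t - \frac{\lambda q^2}{4},
\]
where $R_{12} := N^{-1}\langle\vct{x}^{(1)}, \vct{x}^{(2)}\rangle$ is the overlap between two independent posterior replicas. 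Integrating from $0$ to $1$ and discarding the non-negative squared-overlap term gives $F_N \geq F(\lambda, q)$ for every $q \geq 0$, and hence $\liminf_N F_N \geq \phi_{\RS}(\lambda)$.

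\textbf{Upper bound via the Franz-Parisi potential.} For $\epsilon > 0$ define
\[
\Phi_N(q, \epsilon) := \frac{1}{N}\, \E\log\int \indi\{|R(\vct{x}, \vct{x}^*) - q|\leq \epsilon\}\, e^{-H(\vct{x})}\, \rmd P_{\x}^{\otimes N}(\vct{x}),
\]
with $R(\vct{x}, \vct{x}^*) := N^{-1}\langle\vct{x}, \vct{x}^*\rangle$. Since $P_{\x}$ has bounded support, the overlap lies in a compact interval; partitioning it into $O(1/\epsilon)$ windows and using $\log\sum \leq \log(\#\mathrm{bins}) + \max\log$ yields $F_N \leq \max_q \Phi_N(q, \epsilon) + O(N^{-1}\log(1/\epsilon))$. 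The key step is the upper bound $\limsup_N \Phi_N(q, \epsilon) \leq F(\lambda, q) + o_\epsilon(1)$, which I would establish via a second Gaussian interpolation on the overlap-constrained ensemble, now between the spike Hamiltonian and $N$ decoupled scalar Hamiltonians at SNR $\lambda q$. Under the constraint both posterior replicas have overlap $\approx q$ with $\vct{x}^*$; combined with concentration of $R_{12}$---obtained from a small perturbation of the Hamiltonian forcing Ghirlanda-Guerra-type identities together with the Gaussian-Poincar\'e inequality for the Wigner noise---this drives $R_{12} \to q$, so $\E\langle(R_{12}-q)^2\rangle$ vanishes and the interpolation identifies the limit of $\Phi_N(q, \epsilon)$ with $F(\lambda, q)$ up to $o_\epsilon(1)$.

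\textbf{Main obstacle.} The delicate point is concentration of $R_{12}$ under the hard-constrained Gibbs measure, where the clean Nishimori property does not directly apply. The standard fix is a vanishingly small random perturbation of the Hamiltonian that restores overlap self-averaging via the Ghirlanda-Guerra identities; these identities together with Nishimori then close the argument. Concentration of $F_N$ around its expectation is routine (Gaussian-Poincar\'e under boundedness of $P_{\x}$) and absorbs into vanishing error terms. Sending first $N \to \infty$ and then $\epsilon \to 0$ yields the matching upper bound $\limsup_N F_N \leq \phi_{\RS}(\lambda)$.
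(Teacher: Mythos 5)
Your lower bound matches the paper exactly, and you correctly identify the two structural ingredients of the upper bound: the Franz--Parisi potential and the Laplace discretization of the overlap $F_N \le \max_q \Phi_N(q,\epsilon) + O(N^{-1}\log(1/\epsilon))$. The critical divergence is in how the constrained interpolation is closed, and there you have a genuine gap.

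You propose to show $\limsup_N \Phi_N(q,\epsilon) \le F(\lambda,q) + o_\epsilon(1)$ by running the interpolation with a single parameter $q$ (the auxiliary scalar channel at SNR $\lambda q$ tied to the constrained overlap) and then arguing that $R_{12}$ concentrates to $q$ by a Ghirlanda--Guerra perturbation, so that the error term $\E\langle(R_{12}-q)^2\rangle_t$ vanishes. This is both heavier than the paper's proof and incompletely specified. As you yourself note, the Nishimori identity is broken under the hard overlap constraint, so one cannot convert $(R_{12}-q)^2$ into $(R_{1*}-q)^2$; but Ghirlanda--Guerra alone only gives that $R_{12}$ self-averages, not that its limiting value equals $q$. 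Without Nishimori, the identification of the concentration point is the entire difficulty, and it is not supplied.

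The paper avoids all of this by a simple but decisive trick you missed: the interpolating Hamiltonian in Proposition~\ref{fixed_overlap_upper_bound} carries \emph{two independent parameters}, $r = \lambda q$ multiplying the scalar noise/quadratic penalty and $s = \lambda m$ multiplying the planted term $x_i x_i^*$, with $m$ the constrained overlap and $q$ a free variational parameter. Differentiating in $t$ then produces $-\frac{\lambda}{4}\E\langle(R_{12}-q)^2\rangle_t + \frac{\lambda}{2}\E\langle(R_{1*}-m)^2\rangle_t + \frac{\lambda q^2}{4} - \frac{\lambda m^2}{2} + O(1/N)$. The replica term appears with a \emph{negative} sign (because Nishimori is not used to cancel it against the $R_{1*}$ term), so it can simply be dropped for an upper bound, while $\E\langle(R_{1*}-m)^2\rangle_t \le \epsilon^2$ by the constraint. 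No overlap concentration of $R_{12}$ is needed at all. The resulting bound $\Phi_\epsilon(m,\vct{x}^*) \le \inf_{q\ge 0}\widehat{F}(\lambda,m,q,\vct{x}^*) + \frac{\lambda}{2}\epsilon^2 + O(1/N)$ then leads to the saddle $\sup_m \inf_q \widebar{F}(\lambda,m,q)$, which a short additional argument (Lemma~\ref{asymmetry_lower_bound}) identifies with $\phi_{\RS}(\lambda)$. So the mechanism that makes the proof ``short'' is precisely the sign argument enabled by decoupling $q$ from $m$; your version reintroduces the hard concentration problem the paper is designed to bypass.
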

The above statement contains precious statistical information. It can be written in at least two other equivalent ways, in terms of the mutual information between $\vct{x}^*$ and $\mtx{Y}$: 
 \[\lim_{N \to \infty} ~\frac{1}{N}I(\mtx{Y},\vct{x}^*)~=~\frac{\lambda}{4}\big(\E_{P_{\x}}[X^2]\big)^2 - \phi_{\RS}(\lambda),\]
 or, denoting by $\P_{\lambda}$ the probability distribution of the matrix $\mtx{Y}$ as per~\eqref{spiked_wigner_model}, in terms of the Kullback-Liebler divergence between $\P_{\lambda}$ and $\P_{0}$: 
 \[\lim_{N \to \infty} ~\frac{1}{N}D_{\KL}(\P_{\lambda},\P_{0}) ~=~ \phi_{\RS}(\lambda).\]
 Furthermore, the point $q^*(\lambda)$ achieving the maximum in the $\RS$ formula (which can be shown to be unique and finite for almost every $\lambda$) can be interpreted as the best overlap any estimator $\widehat{\theta}(\mtx{Y})$ can have with the spike $\vct{x}^*$. Indeed, the overlap of a draw $\vct{x}$ from the posterior $\P_{\lambda} (\cdot | \mtx{Y})$ with $\vct{x}^*$ concentrates about $q^*(\lambda)$. See \cite{barbier2016mutual,lelarge2016fundamental,alaoui2017finite} for various forms of this statement.  
 
\section{Comment on the existing proofs} 
The proof of the lower bound $\liminf F_N \ge \phi_{\RS}(\lambda)$
 relies on an application of Guerra's interpolation method~\cite{guerra2001sum,guerra2002thermodynamic}, and is fairly short and transparent. (See~\cite{krzakala2016mutual}.) 
 Available proofs of the converse bound $\limsup F_N \le \phi_{\RS}(\lambda)$ (as well as overlap concentration) are on the other hand highly involved. 
Barbier et al.\ \cite{barbier2016mutual} and Deshpande et al.\ \cite{deshpande2016asymptotic} adopt an algorithmic approach: they analyze an approximate message passing procedure and show that the produced estimator asymptotically achieves an overlap of $q^*(\lambda)$ with the spike. Thus the posterior mean, being the optimal estimator, must also achieve the same overlap. This allows to prove overlap convergence and thus show the converse bound.      
A difficulty one has to overcome with this method is that AMP (and supposedly any other algorithm) may fail to achieve the optimal overlap in the presence of first-order phase transitions, which traps the algorithm in a bad local optimum of the $\RS$ potential. \emph{Spatial coupling}, an idea from coding theory, is used in~\cite{barbier2016mutual} to overcome this problem. Lelarge and Miolane~\cite{lelarge2016fundamental} on the other hand use the Aizenman-Sims-Starr scheme~\cite{aizenman2003extended}, a relative of the cavity method developed within spin-glass theory, to prove the upper bound. Barbier and Macris~\cite{barbier2017stochastic} prove the upper bound via a \emph{adaptive} version of the interpolation method that proceeds via a sequence of intermediate interpolation steps.
Recently, the optimal rate of convergence and constant order corrections to the $\RS$ formula were proved in~\cite{alaoui2017finite} using a rigorous incarnation of the cavity method due to Talagrand~\cite{talagrand2011mean1}. However, all the current approaches (perhaps to a lesser extent for~\cite{barbier2017stochastic}) require the execution of long and technical arguments. 

In this note, we show that the upper bound in Theorem 1 admits a fairly simple proof based on the same interpolation idea that yielded the lower bound, combined with an application of the Laplace method and concentration of measure. The main idea is to consider a version of the free entropy~\eqref{free_energy} of a subsystem of configurations $\vct{x}$ having a \emph{fixed} overlap with the spike $\vct{x^*}$. We then proceed by applying the Guerra bound and optimize over this fixed overlap (which is a free parameter) to obtain an upper bound in the form of a saddle (max-min) formula. A small extra effort is needed to show that this last formula is another representation of the $\RS$ formula. The idea of restricting the overlap dates back to Franz and Parisi~\cite{franz1995recipes,franz1998effective} who introduced it in order to study the relaxation properties of dynamics in spin-glass models. The free entropy at fixed overlap bears the name of the \emph{Franz-Parisi potential}. Our proof thus hinges on a upper bound on this potential, which is may be of independent interest.  We first start by presenting the proof of the lower bound, which is a starting point for our argument. We present the proof in the case $\sigma=\infty$, i.e., we omit the diagonal terms of $\mtx{Y}$. This is only done to keep the displays concise; recovering the general case is straightforward since the diagonal has vanishing contribution to the overall free entropy. Finally, the method presented here can be easily generalized to all spiked tensor models of even order~\cite{richard2014statistical}, thus recovering the main results of~\cite{lesieur2017statistical}.  
%
%\vspace{-4mm}
%%%%%%%%%%%%%%%%%%%%%%%%%%%%%%%%%%%%%%%%
\section{Proof of Theorem~\ref{RS_formula}}

Let $t \in [0,1]$ and consider an interpolating Hamiltonian
\begin{align}\label{interpolating_hamiltonian}
-H_t(\vct{x}) &:= \sum_{i < j}  \sqrt{\frac{t\lambda}{N}} W_{ij}x_ix_j + \frac{t\lambda}{N}x_ix_i^*x_jx_j^* -\frac{t\lambda}{2N}x_i^2x_j^2 \nonumber\\
&~~+\sum_{i=1}^N  \sqrt{(1-t)r} z_{i}x_i +  (1-t)r x_ix_i^* -\frac{(1-t)r}{2} x_i^2,
\end{align}
where the $z_i$'s are i.i.d.\ standard Gaussian r.v.'s independent of everything else. For $f : (\R^{N})^{n+1} \mapsto \R$, we define the Gibbs average of $f$ as
 \begin{align}\label{gibbs_average_one}
 \left\langle f(\vct{x}^{(1)},\cdots,\vct{x}^{(n)},\vct{x}^*)\right\rangle_t 
 := \frac{\int f(\vct{x}^{(1)},\cdots,\vct{x}^{(n)},\vct{x}^*) \prod_{l=1}^n e^{- H_t(\vct{x}^{(l)})} \rmd P_{\x}^{\otimes N}(\vct{x}^{(l)})}{\int \prod_{l=1}^n e^{- H_t(\vct{x}^{(l)})}  \rmd P_{\x}^{\otimes N}(\vct{x}^{(l)})}. %\nonumber
 \end{align}
This is the average of $f$ with respect to the posterior distribution of $n$ copies $\vct{x}^{(1)},\cdots,\vct{x}^{(n)}$ of $\vct{x}^*$ given the augmented set of observations
\begin{align}\label{augmented_observations}
\begin{cases}
Y_{ij} &= \sqrt{\frac{t\lambda}{N}} x^*_ix^*_j + W_{ij}, \quad 1 \le i \le j \le N, \\
y_i &= \sqrt{(1-t)r}x^*_i + z_i, \quad 1 \le i \le N. 
\end{cases}
\end{align}  
The variables $\vct{x}^{(l)}, l=1\cdots,n$ are called {\em replicas}, and are interpreted as random variables independently drawn from the posterior. When $n=1$ we simply write $f(\vct{x},\vct{x}^*)$ instead of $f(\vct{x}^{(1)},\vct{x}^*)$. We shall denote the overlaps between two replicas as follows: for $l,l'=1,\cdots,n,*$, we let
 \[R_{l,l'} := \vct{x}^{(l)} \cdot \vct{x}^{(l')} = \frac{1}{N} \sum_{i=1}^N x_i^{(l)}x_i^{(l')}.\]

 A simple consequence of Bayes' rule is that the $n+1$-tuples $(\vct{x}^{(1)},\cdots,\vct{x}^{(n+1)})$ and $(\vct{x}^{(1)},\cdots,\\ \vct{x}^{(n)},\vct{x}^*)$ have the same law under $\E\langle \cdot \rangle_t$ (see Proposition 16 in~\cite{lelarge2016fundamental}). This bears the name of {\em the Nishimori property} in the spin glass literature~\cite{nishimori2001statistical}.  

\subsection{The lower bound} 
Reproducing the argument of~\cite{krzakala2016mutual}, we prove using Guerra's interpolation~\cite{guerra2001sum} and the Nishimori property that
\[F_N \ge \phi_{\RS}(\lambda) - \frac{K}{N}.\]
We let $r = \lambda q$ in the definition of $H_t$ and let
\[\varphi(t) := \frac{1}{N}\E\log \int e^{-H_t(\vct{x})}  \rmd P_{\x}^{\otimes N}(\vct{x}).\]
 A short calculation based on Gaussian integration by parts shows that
{\small 
\begin{align*}
\varphi'(t) = & -\frac{\lambda}{4} \E\left\langle (R_{1,2}-q)^2\right\rangle_{t} + \frac{\lambda}{4} q^2 + \frac{\lambda}{4N^2} \sum_{i=1}^N \E\left\langle {x_i^{(1)}}^2{x_i^{(2)}}^2\right\rangle_{t} \\
&+\frac{\lambda}{2} \E\left\langle (R_{1,*}- q)^2\right\rangle_{t} - \frac{\lambda}{2}q^2 - \frac{\lambda}{2N^2} \sum_{i=1}^N \E\left\langle {x_i}^2{x_i^{*}}^2\right\rangle_{t},
\end{align*}
}
By the Nishimori property, the expressions involving the pairs $(\vct{x}, \vct{x}^*)$ on the one hand and $(\vct{x}^{(1)},\vct{x}^{(2)})$ on the other in the brackets are equal. We then obtain
\[\varphi'(t) = \frac{\lambda}{4} \E\left\langle (R_{1,*}-q)^2\right\rangle_{t} -\frac{\lambda}{4} q^2 - \frac{\lambda}{4N} \E\left\langle {x_N}^2{x_N^{*}}^2\right\rangle_{t}.\]  
Observe that the last term is $\bigo(1/N)$ since the variables $x_N$ are bounded. Moreover, the first term is always non-negative so we obtain
\[\varphi'(t) \ge -\frac{\lambda}{4} q^2 - \frac{K}{N}.\]
Since $\varphi(1) =F_N$ and $\varphi(0) = \psi(\lambda q)$, integrating over $t$, we obtain for all $q \ge0$,
$F_N \ge F(\lambda, q) - \frac{K}{N},$
and this yields the lower bound.

\subsection{The upper bound} 
We prove the converse bound
\[F_N \le \phi_{\RS}(\lambda) + \bigo\Big(\frac{\log N}{\sqrt{N}}\Big).\]
We introduce the Franz-Parisi potential~\cite{franz1995recipes,franz1998effective}. For $\vct{x}^* \in \R^N$ fixed, $m \in \R$ and $\epsilon>0$ we define 
\begin{equation*}\label{free_energy_fixed_overlap}
\Phi_{\epsilon}(m,\vct{x}^*) := \frac{1}{N}\E\log \int \indi\{ R_{1,*} \in [m,m+\epsilon)\} e^{-H(\vct{x})}  \rmd P_{\x}^{\otimes N}(\vct{x}),
\end{equation*}
where the expectation is over $\mtx{W}$.
This is the free entropy of a subsystem of configurations having an overlap close to a fixed value $m$ with a planted signal $\vct{x}^*$. It is clear that $\E_{\vct{x}^*}\Phi_{\epsilon}(m,\vct{x}^*) \le F_N$. We will argue via the Laplace method and concentration of measure that $\sup_{m \in \R} \E_{\vct{x}^*}\Phi_{\epsilon}(m,\vct{x}^*) \approx F_N$, then use Guerra's interpolation to \emph{upper} bound $\Phi_{\epsilon}(m,\vct{x}^*)$ (notice that this method yielded a \emph{lower} bound on $F_N$ due to the Nishimori property).  
Let us define a bit of more notation. For $r \in \Rp, s \in \R$, let 
\[\widehat{\psi}(r,s) := \E_{z} \log \int \exp\left(\sqrt{r}zx + sx - \frac{r}{2} x^2\right) \rmd P_{\x}(x),\]
where $z \sim\normal(0,1)$, and $\widebar{\psi}(r,s) =\E_{x^*}\widehat{\psi}(r,sx^*)$ where $x^* \sim P_{\x}$. 
Moreover, let
\[\widehat{F}(\lambda, m,q,\vct{x}^*) := \frac{1}{N}\sum_{i=1}^N\widehat{\psi}(\lambda q,\lambda m x^*_i)  -\frac{\lambda m^2}{2} +\frac{\lambda q^2}{4},\]
%\widebar{F}(\lambda, m,q) := \widebar{\psi}(\lambda q,\lambda m)  -\frac{\lambda m^2}{2} +\frac{\lambda q^2}{4}
and similarly define $\widebar{F}(\lambda, m,q) = \E_{\vct{x}^*}\widehat{F}(\lambda, m,q,\vct{x}^*) =\widebar{\psi}(\lambda q,\lambda m)  -\frac{\lambda m^2}{2} +\frac{\lambda q^2}{4}$. 

\begin{proposition}\label{laplace_method}
There exist $K>0$ such that for all $\epsilon>0$, we have
\[F_N \le \E_{\vct{x}^*}\Big[\max_{ l \in \Z , |l| \le K/\epsilon  }\Phi_{\epsilon}(l\epsilon,\vct{x}^*)\Big] +  \frac{\log(K/\epsilon)}{\sqrt{N}}.\]
\end{proposition}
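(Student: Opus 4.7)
The plan is to decompose the full partition function as a sum over a finite grid of overlap windows of width $\epsilon$, reduce the log of this sum to a log of the largest term at the cost of a multiplicative factor of $|L|$, and then invoke Gaussian concentration to pull the maximum outside the expectation at negligible cost.

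Since $P_{\x}$ has bounded support, there is $K_0 > 0$ with $|R_{1,*}| \le K_0$ almost surely. Partition $[-K_0,K_0]$ into $|L| = \bigo(K_0/\epsilon)$ disjoint bins $B_l := [l\epsilon,(l+1)\epsilon)$ indexed by integers $l$ with $|l| \le K_0/\epsilon + 1$, and set
\[Z_l := \int \indi\{R_{1,*} \in B_l\}\, e^{-H(\vct{x})} \rmd P_{\x}^{\otimes N}(\vct{x}),\]
so that $Z = \sum_l Z_l$. Then $\log Z \le \log|L| + \max_l \log Z_l$, and taking expectations gives the preliminary estimate
\[F_N \;\le\; \frac{\log|L|}{N} + \frac{1}{N}\E\bigl[\max_l \log Z_l\bigr].\]
The task then reduces to exchanging $\max$ and $\E$ at a controlled cost.

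To this end, condition on $\vct{x}^*$ and view each $\log Z_l$ as a function of the Gaussian vector $(W_{ij})_{i<j}$. A direct computation gives $\partial_{W_{ij}} \log Z_l = \sqrt{\lambda/N}\,\langle x_i x_j \rangle_l$, where $\langle \cdot \rangle_l$ is the Gibbs average restricted to $B_l$; boundedness of the prior then yields $\|\nabla \log Z_l\| = \bigo(\sqrt{N})$. By Gaussian concentration, each $\log Z_l - \E_W \log Z_l$ is thus sub-Gaussian with variance proxy $\bigo(N)$, so the standard maximal inequality over the $|L|$ indices gives
\[\E_W\bigl[\max_l \log Z_l\bigr] \;\le\; \max_l \E_W \log Z_l + C\sqrt{N\log|L|}.\]
Dividing by $N$, recognizing $\frac{1}{N}\E_W \log Z_l = \Phi_\epsilon(l\epsilon,\vct{x}^*)$, averaging over $\vct{x}^*$, and plugging back into the preliminary estimate yields
\[F_N \;\le\; \E_{\vct{x}^*}\bigl[\max_l \Phi_\epsilon(l\epsilon,\vct{x}^*)\bigr] + \frac{\log|L|}{N} + C\sqrt{\frac{\log|L|}{N}}.\]
Since $|L| = \bigo(K_0/\epsilon)$ and $\sqrt{\log|L|/N} \le \log|L|/\sqrt{N}$ whenever $|L| \ge e$, both error terms can be absorbed into $\log(K/\epsilon)/\sqrt{N}$ after enlarging $K$, giving the claim. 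The only genuinely nontrivial ingredient is the Lipschitz bound for $\log Z_l$; the rest is bookkeeping.
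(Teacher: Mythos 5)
Your proof is correct and follows essentially the same route as the paper: discretize the overlap into $\bigo(1/\epsilon)$ bins, bound $\log\sum_l Z_l$ by $\log|L| + \max_l\log Z_l$, establish that $\frac1N\log Z_l$ is $\bigo(1/\sqrt N)$-Lipschitz in the Gaussian disorder (the paper's Lemma~\ref{sub_gaussian_awkward}), and then use a sub-Gaussian maximal inequality to exchange $\max$ and $\E_W$. The only cosmetic difference is that you invoke the standard maximal inequality for sub-Gaussian families directly, obtaining an error $\bigo(\sqrt{\log|L|/N})$, whereas the paper rederives an MGF bound via tail integration and applies the exponential-moment trick with $\gamma=\sqrt N$, giving $\bigo(\log|L|/\sqrt N)$; both are standard and land at the claimed $\log(K/\epsilon)/\sqrt N$ after enlarging $K$.
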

Now we upper bound $\Phi_\epsilon$ in terms of $\widehat{F}$:
\begin{proposition}[Interpolation upper bound] \label{fixed_overlap_upper_bound}
There exist $K>0$ depending on $\lambda \ge 0$ such that for all $m \in \R$ and $\epsilon>0$ we have
%\[\Phi_{\epsilon}(m) \le \inf_{q \ge 0} \widebar{F}(\lambda, m, q) +  \frac{\lambda}{2} \epsilon^2 + \frac{\lambda K}{N}. \]
\[\Phi_{\epsilon}(m,\vct{x^*}) \le \inf_{q \ge 0} \widehat{F}(\lambda, m, q,\vct{x^*}) +  \frac{\lambda}{2} \epsilon^2 + \frac{K}{N}. \]
\end{proposition}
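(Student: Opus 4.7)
My plan is to run a Guerra-style interpolation on the \emph{restricted} partition function defining $\Phi_\epsilon(m,\vct{x}^*)$, but with the crucial twist that the planted coupling and the Gaussian-noise strength in the decoupled part of the interpolating Hamiltonian are treated as \emph{independent} parameters $s$ and $r$. Concretely, I would take
$$-H_t(\vct{x}) := A_t(\vct{x}) + \sum_{i=1}^{N}\Bigl(\sqrt{(1-t)r}\,z_i x_i + (1-t)s\,x_i x_i^* - \tfrac{(1-t)r}{2}\,x_i^2\Bigr),$$
with $A_t$ the same Wigner-matrix part as in~\eqref{interpolating_hamiltonian}, and eventually set $r=\lambda q$ and $s=\lambda m$. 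The choice $s\neq r$ is the one new ingredient: the restriction $R_{1,*}\in[m,m+\epsilon)$ destroys the planted structure, so the restricted Gibbs measure is not a Bayes posterior for any model, the Nishimori identity used in the lower bound is unavailable, and the replica-replica overlap $R_{1,2}$ and the replica-spike overlap $R_{1,*}$ must be treated separately throughout.

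Set $\varphi(t) := \tfrac{1}{N}\E \log \int \indi\{R_{1,*}\in[m,m+\epsilon)\}\,e^{-H_t(\vct{x})}\,\rmd P_{\x}^{\otimes N}(\vct{x})$, so that $\varphi(1)=\Phi_\epsilon(m,\vct{x}^*)$. The indicator does not involve $W$ or $z$, so a Gaussian integration-by-parts calculation strictly parallel to the one from the lower bound (but without invoking Nishimori) yields, after completing squares in the coefficients of $R_{1,2}$ and $R_{1,*}$,
$$\varphi'(t) = -\frac{\lambda}{4}\,\E\langle (R_{1,2}-q)^2\rangle_t + \frac{\lambda}{2}\,\E\langle (R_{1,*}-m)^2\rangle_t + \frac{\lambda q^2}{4} - \frac{\lambda m^2}{2} + O(1/N).$$
The first term is non-positive and is simply dropped; the restriction enforces $R_{1,*}-m\in[0,\epsilon)$ pointwise under the restricted Gibbs average, so the second term is at most $\tfrac{\lambda}{2}\epsilon^2$. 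Integrating over $t\in[0,1]$ gives $\Phi_\epsilon(m,\vct{x}^*)\le \varphi(0) + \tfrac{\lambda q^2}{4} - \tfrac{\lambda m^2}{2} + \tfrac{\lambda}{2}\epsilon^2 + K/N$.

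At $t=0$ the Hamiltonian factorizes across sites; dropping the indicator only enlarges the partition function, and the resulting site-wise log-integrals give $\varphi(0)\le \tfrac{1}{N}\sum_{i=1}^{N}\widehat{\psi}(\lambda q,\lambda m x_i^*)$. Reassembling the pieces yields $\Phi_\epsilon(m,\vct{x}^*)\le \widehat{F}(\lambda,m,q,\vct{x}^*) + \tfrac{\lambda}{2}\epsilon^2 + K/N$ for every $q\ge 0$, and taking the infimum over $q$ is the claim. The main obstacle is conceptual rather than computational: in the lower bound the choice $s=r$ is forced so that Nishimori collapses the $R_{1,2}$ and $R_{1,*}$ contributions into a single square; keeping $s$ and $r$ independent is precisely what lets us simultaneously discard the $R_{1,2}$ term by sign and tame the $R_{1,*}$ term by the $\epsilon$-width of the restriction, which is the mechanism that flips a lower bound into an upper bound. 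The endpoint singularities $1/\sqrt{t}$ and $1/\sqrt{1-t}$ appearing in $\partial_t H_t$ cancel against the compensating factors produced by Gaussian integration by parts, and the $O(1/N)$ errors come from diagonal sums like $\tfrac{1}{N}\sum_i x_i^4$, which are bounded since $P_{\x}$ has bounded support.
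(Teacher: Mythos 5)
Your proposal is correct and follows essentially the same route as the paper: the interpolating Hamiltonian with independent parameters $r=\lambda q$ and $s=\lambda m$, Gaussian integration by parts without Nishimori, dropping the non-positive $-\tfrac{\lambda}{4}\E\langle(R_{1,2}-q)^2\rangle_t$ term, bounding $\E\langle(R_{1,*}-m)^2\rangle_t\le\epsilon^2$ via the overlap restriction, and bounding $\varphi(0)$ by discarding the indicator to obtain the site-factorized $\widehat\psi$ terms. Your explanatory remark about why decoupling $s$ from $r$ is the mechanism that turns Guerra's lower bound into an upper bound is accurate and matches the paper's intent.
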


\textbf{Remark:} This simple upper bound on the Franz-Parisi potential---which may be of independent interest---can be straightforwardly generalized to spiked tensor models of even order. Indeed, as will be apparent from the proof in the present matrix case, a crucial step in obtaining the inequality is the positivity of a certain hard-to-control remainder term\footnote{We note that the adaptive interpolation method of Barbier and Macris~\cite{barbier2017stochastic} is able to bypass this issue of positivity of the remainder term along the interpolation path, as long as this interpolation ``stays on the Nishimori line", i.e., corresponds to an inference problem for every $t$ (this is however not true in the case of the FP potential.) They are thus able to compute the free entropy of (asymmetric) spiked tensor models of odd order. See~\cite{barbier2017layered,barbier2017phase}.}. Tensor models of even order enjoy a convexity property that ensures the positivity of this remainder.

From Propositions~\ref{laplace_method} and~\ref{fixed_overlap_upper_bound}, an upper bound on $F_N$ in the form of a saddle formula begins to emerge. For a fixed $m \in \R$ let $\bar{q} = \bar{q}(\lambda,m)$ be any minimizer of $q \mapsto \widebar{F}(\lambda, m,q)$ on $\Rp$. (By differentiating $\widebar{F}$, we can check that $\bar{q}$ is bounded uniformly in $m$.) Then we have
\begin{equation}\label{upper_bound_1}
F_N \le \E_{\vct{x}^*} \hspace{-.1cm}\Big[\max_{m=l\epsilon \atop |l|\le K/\epsilon} \widehat{F}(\lambda,m,\bar{q}(\lambda,m),\vct{x}^*)\Big] +  \frac{\lambda}{2} \epsilon^2 +  \frac{\log(K/\epsilon)}{\sqrt{N}}.
\end{equation}
At this point we need to push the expectation inside the supremum. This will be done using a concentration argument.
\begin{lemma}\label{concentration_planted}
 There exists $K > 0$ such that for all $\lambda \ge 0$, $m\in \R$, $q \ge 0$ and $t \ge 0$, 
 \[\Pr_{\vct{x}^*}\left(\abs{\widehat{F}(\lambda,m,q,\vct{x}^*) - \widebar{F}(\lambda,m,q)} \ge t\right) \le 2e^{-\frac{Nt^2}{\lambda^2K(|m|+q)^2}}.\] 
\end{lemma}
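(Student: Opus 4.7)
The key observation is that the left-hand side of the concentration inequality is a centered average of i.i.d.\ bounded random variables. Writing $g(u) := \widehat{\psi}(\lambda q, \lambda m u)$, one has
\[\widehat{F}(\lambda, m, q, \vct{x}^*) - \widebar{F}(\lambda, m, q) = \frac{1}{N} \sum_{i=1}^N \big(g(x_i^*) - \E\, g(x_i^*)\big),\]
since the deterministic terms $-\tfrac{\lambda m^2}{2}+\tfrac{\lambda q^2}{4}$ cancel. My plan is therefore to (i) bound the oscillation of $g$ on the support of $P_{\x}$, and then (ii) apply Hoeffding's inequality.

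For step (i), I would differentiate $\widehat{\psi}$ under the expectation. Setting $Z(r,s,z) := \int e^{\sqrt{r}zx + sx - rx^2/2} \rmd P_{\x}(x)$, one has
\[\partial_s \widehat{\psi}(r, s) = \E_z \frac{\int x\, e^{\sqrt{r}zx + sx - rx^2/2} \rmd P_{\x}(x)}{Z(r,s,z)},\]
which is the expectation of a posterior mean of $x$ under a tilting of $P_{\x}$. Since $P_{\x}$ has bounded support, say contained in $[-M,M]$ for some finite $M$, the posterior mean also lies in $[-M,M]$, so $|\partial_s \widehat{\psi}(r,s)| \le M$ uniformly in $(r,s) \in \Rp \times \R$. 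Consequently $u \mapsto g(u) = \widehat{\psi}(\lambda q, \lambda m u)$ is $(\lambda |m| M)$-Lipschitz, and since $|x_i^*| \le M$ almost surely, the random variable $g(x_i^*)$ is confined to an interval of length at most $2\lambda |m| M^2$.

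For step (ii), Hoeffding's inequality applied to the independent, bounded, mean-zero summands $g(x_i^*) - \E g(x_i^*)$ yields
\[\Pr\big(|\widehat{F}(\lambda,m,q,\vct{x}^*) - \widebar{F}(\lambda,m,q)| \ge t\big) \le 2\exp\left(-\frac{Nt^2}{2\lambda^2 m^2 M^4}\right).\]
Choosing $K := 2M^4$ and bounding $m^2 \le (|m|+q)^2$ delivers the stated inequality. I do not anticipate a real obstacle: the one step that needs care is the Lipschitz estimate on $\widehat{\psi}$ in its second argument, which is immediate once the $s$-derivative is recognized as a bounded posterior mean. Note in particular that the parameter $q$ only enters the variance proxy as slack, reflecting the fact that $q$ does not affect the $\vct{x}^*$-dependent part of $\widehat{F}$.
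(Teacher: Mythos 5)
Your proof is correct and follows essentially the same route as the paper: recognize $\widehat{F}-\widebar{F}$ as a centered average of i.i.d.\ bounded terms, control the size of each summand via derivative bounds on $\widehat{\psi}$ (using that the tilted posterior mean lies in the support of $P_{\x}$), and apply Hoeffding. The paper bounds $|\widehat{\psi}(r,sx^*)|$ itself via both $\partial_r$ and $\partial_s$ together with $\widehat{\psi}(0,0)=0$, whereas you bound only the oscillation of $u\mapsto\widehat{\psi}(\lambda q,\lambda m u)$ via $\partial_s$ alone — a slightly cleaner and marginally sharper variant of the same argument.
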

It is a routine computation to deduce from Lemma~\ref{concentration_planted} (and boundedness of both $m$ and $q$) that the expected supremum is bounded by the supremum of the expectation plus a small term (a similar argument is given in the proof of Proposition~\ref{laplace_method}):  
\begin{align*}
\E\hspace{-.1cm}\sup_{m=l\epsilon \atop |l|\le K/\epsilon} \hspace{-.1cm}\widehat{F}(\lambda,m,\bar{q}(\lambda,m),\vct{x}^*) &\le \hspace{-.1cm}\sup_{m=l\epsilon \atop |l|\le K/\epsilon} \hspace{-.1cm}\widebar{F}(\lambda,m,\bar{q}(\lambda,m))+\delta,
\end{align*}
where $\delta = K\log(K/\epsilon)/\sqrt{N}$.
Since $\bar{q}$ is a minimizer of $\widebar{F}$, it follows from~\eqref{upper_bound_1} that
\begin{equation}\label{upper_bound_2}
F_N \le \sup_{m \in \R} \inf_{q \ge 0} \widebar{F}(\lambda,m,q) +  \frac{\lambda}{2} \epsilon^2 +  \frac{K\log(K/\epsilon)}{\sqrt{N}}.
\end{equation}
We now let $\epsilon = N^{-1/4}$, and conclude by noticing that the above saddle formula is another expression for $\phi_\RS$:
\begin{proposition}\label{other_expression}
$\phi_{\RS}(\lambda) =  \sup_{m \in \R} \inf_{q \ge 0} \widebar{F}(\lambda,m,q).$
\end{proposition}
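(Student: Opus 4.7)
The plan is to prove the identity by establishing both inequalities. The key observation used throughout is the identity $\widebar{\psi}(r,r) = \psi(r)$, which is immediate on comparing the definitions of $\psi$ and $\widebar{\psi}$; this yields $\widebar{F}(\lambda, m, m) = F(\lambda, m)$ whenever $m \ge 0$.

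For the upper bound $\sup_{m} \inf_{q} \widebar{F} \le \phi_{\RS}$: plugging $q = m$ into the inner infimum gives $\inf_{q \ge 0} \widebar{F}(\lambda, m, q) \le F(\lambda, m) \le \phi_{\RS}(\lambda)$ for every $m \ge 0$; taking the supremum over $m \ge 0$ yields the bound. Values $m < 0$ contribute no more (by symmetry of $P_{\x}$ when applicable, or via the convexity of $s \mapsto \widebar{\psi}(r, s)$ inherited from the convexity in $s$ of $\widehat{\psi}(r, s)$, which is itself a consequence of Gaussian differentiation).

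For the reverse inequality $\sup_m \inf_q \widebar{F} \ge \phi_{\RS}$: let $q^\star \in \argmax_{q \ge 0} F(\lambda, q)$. A short Gaussian integration by parts applied to the definition of $\widehat{\psi}$ yields $\psi'(r) = \tfrac{1}{2} Q(r)$, where $Q(r)$ denotes the Bayesian scalar overlap at SNR $r$ (the squared posterior mean, averaged over signal and noise); the optimality of $q^\star$ in $F(\lambda,\cdot)$ thus becomes the state-evolution fixed point $q^\star = Q(\lambda q^\star)$. Set $m = q^\star$; it remains to show $\inf_{q \ge 0} \widebar{F}(\lambda, q^\star, q) = F(\lambda, q^\star) = \phi_{\RS}$. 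A second Gaussian integration by parts gives $\partial_q \widebar{F}(\lambda, q^\star, q) = \tfrac{\lambda}{2}\bigl(q - \E \langle x \rangle^2\bigr)$, where $\langle \cdot \rangle$ is the Gibbs average with Hamiltonian parameters $(r, s) = (\lambda q, \lambda q^\star x^\star)$. At $q = q^\star$ this Gibbs measure is exactly the Bayesian scalar posterior at SNR $\lambda q^\star$, so the Nishimori identity gives $\E \langle x \rangle^2 = Q(\lambda q^\star) = q^\star$ and the derivative vanishes; combined with $\widebar{F}(\lambda, q^\star, q^\star) = \phi_{\RS}$ this identifies $q^\star$ as a critical point of the correct value.

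The main obstacle is promoting this from stationarity to a genuine global minimum over $q \ge 0$. My proposed route is to use the exact decomposition $\widebar{\psi}(r, s) = \psi(s^2/r) - \KL(p_{s^2/r}\,\|\,p_{r})$, obtained by completing the square inside the integrand defining $\widebar{\psi}$ and identifying the resulting expression as a cross-entropy of the marginals $p_r$ of the scalar channel $y = \sqrt{r}\,x + z$, $x \sim P_{\x}$. Combined with the change of variable $q' := q^{\star 2}/q$, this rewrites $\widebar{F}(\lambda, q^\star, q)$ as a ``scalar'' piece $\psi(\lambda q') - \tfrac{\lambda q^{\star 2}}{2} + \tfrac{\lambda q^{\star 4}}{4 q'^2}$ minus a non-negative KL correction. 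The scalar piece has a unique strict global minimum on $(0,\infty)$ at $q' = q^\star$ with value $\phi_{\RS}$, since its derivative $\tfrac{\lambda}{2}\bigl(Q(\lambda q') - q^{\star 4}/q'^3\bigr)$ is strictly monotone in $q'$ (by the monotonicity of $Q$ from I--MMSE) with its unique zero at $q' = q^\star$, and it diverges as $q' \downarrow 0$ and grows without bound as $q' \uparrow \infty$. The delicate step is then controlling the subtracted KL correction---which vanishes at $q' = q^\star$ but is otherwise positive---so that the full expression cannot dip below $\phi_{\RS}$; I would handle this via a quantitative second-order comparison at the matched point $q' = q^\star$, where both the gap (scalar piece minus $\phi_{\RS}$) and the KL vanish to second order, together with a boundary analysis to rule out excursions at the ends of the interval.
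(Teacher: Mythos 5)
Your proof has gaps in both directions, and in both cases the paper's argument is materially different.

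For $\sup_m\inf_q\widebar{F}\le\phi_{\RS}$, your treatment of $m\ge 0$ (plug in $q=m$, use $\widebar\psi(r,r)=\psi(r)$) matches the paper. But the $m<0$ case is not handled by a handwave. Convexity of $s\mapsto\widebar\psi(r,s)$ does not give $\widebar\psi(r,-r)\le\widebar\psi(r,r)$ --- convexity by itself yields inequalities in the wrong direction (e.g.\ $\widebar\psi(r,-r)\ge 2\widebar\psi(r,0)-\widebar\psi(r,r)$), and for non-symmetric $P_{\x}$ there is no a priori symmetry to invoke. The paper devotes a separate result (Lemma~\ref{asymmetry_lower_bound}) to exactly this inequality, proved by a rather delicate interpolation between $P_{\x}$ and its symmetrized version plus a Nishimori argument. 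You need something of that strength.

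For $\sup_m\inf_q\widebar{F}\ge\phi_{\RS}$, your stationarity computation at $m=q=q^\star$ is correct, but you yourself concede that promoting it to a global minimum over $q$ is the ``main obstacle,'' and your sketch does not close it. The underlying difficulty is structural: for $q\neq q^\star$ the Gibbs measure defined by $\widehat\psi(\lambda q,\lambda q^\star x^\star)$ is a \emph{mismatched} scalar model, so the Nishimori identity you use at $q=q^\star$ is unavailable, and nothing forces $\E\langle x\rangle^2$ to cross $q$ monotonically. Your proposed decomposition $\widebar\psi(r,s)=\psi(s^2/r)-\KL(p_{s^2/r}\,\|\,p_r)$ is also suspect as written: the right-hand side depends on $s$ only through $s^2$, yet $\widebar\psi(r,s)$ genuinely depends on the sign of $s$ for asymmetric priors (that is precisely the content of Lemma~\ref{asymmetry_lower_bound}), so it cannot be an identity for all $s\in\R$ in the form you state. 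The paper avoids all of this: it deduces $\phi_{\RS}\le\sup_m\inf_q\widebar F$ \emph{for free} by sandwiching, i.e.\ combining the already-established lower bound $F_N\ge\phi_{\RS}-o_N(1)$ with the upper bound $F_N\le\sup_m\inf_q\widebar F+o_N(1)$ from~\eqref{upper_bound_2} and letting $N\to\infty$. That sidestep is the key idea of the paper's proof of this proposition, and it is what makes the argument short; a direct analytic proof of the global-minimum claim is substantially harder and, as left here, is incomplete.
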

\begin{proof}
One inequality follows from~\eqref{upper_bound_2} and the lower bound $F_N \ge \phi_{\RS}(\lambda) - o_N(1)$. For the converse inequality, we notice that for all $m \in \R$
\[\inf_{q \ge 0} \widebar{F}(\lambda, m, q) \le \widebar{F}(\lambda, m, |m|) = \widebar{\psi}(\lambda |m|, \lambda m) - \frac{\lambda}{4} |m|^2.\]
 Now we use the fact that the function $\widebar{\psi}$ is largest when its second argument is positive:  
\begin{lemma}%[Proposition 23~\cite{alaoui2017finite}]
\label{asymmetry_lower_bound}
For all $r \ge 0$ we have 
$\widebar{\psi}(r,-r) \le \widebar{\psi}(r,r).$
\end{lemma}
This implies 
$\inf_{q \ge 0} \widebar{F}(\lambda, m, q) \le F(\lambda, |m|) - \frac{\lambda}{4} |m|^2.$
Taking the supremum over $m$ yields the converse bound.
\end{proof}

%\vspace{.2cm}
\noindent\begin{proofof}{Proposition~\ref{laplace_method}}
Let $\epsilon>0$. Since the prior $P_{\x}$ has bounded support, we can grid the set of the overlap values $R_{1,*}$ by $2K/\epsilon$ many intervals of size $\epsilon$ for some $K >0$. This allows the following discretization, where $l$ runs over the finite range $\{-K/\epsilon,\cdots,K/\epsilon\}$: 
\begin{align}
F_{N} &= \frac{1}{N} \E\log \sum_{l} \int \indi\{R_{1,*} \in [l\epsilon, (l+1)\epsilon)\} e^{-H(\vct{x})}  \rmd P_{\x}^{\otimes N}(\vct{x})\nonumber\\
&\le \frac{1}{N} \E\log \frac{2K}{\epsilon}\max_{l} \int \indi\{R_{1,*} \in [l\epsilon, (l+1)\epsilon)\} e^{-H(\vct{x})}  \rmd P_{\x}^{\otimes N}(\vct{x})\nonumber\\
&= \frac{1}{N} \E \max_{l} \log\int \indi\{R_{1,*} \in [l\epsilon, (l+1)\epsilon)\} e^{-H(\vct{x})}  \rmd P_{\x}^{\otimes N}(\vct{x}) + \frac{\log (2K/\epsilon)}{N}.\label{crude_upperBound} 
\end{align}
In the above, $\E$ is w.r.t.\ both $\mtx{W}$ and $\vct{x}^*$. We use concentration of measure to push the expectation over $\mtx{W}$ to the left of the maximum. Let 
\[Z_l := \int   \indi\{R_{1,*} \in [l\epsilon, (l+1)\epsilon)\} e^{-H(\vct{x})}  \rmd P_{\x}^{\otimes N}(\vct{x}).\]
We show that each term $X_l= \frac{1}{N}\log Z_l$ concentrates about its expectation (in the randomness of $\mtx{W}$). Let $\E'$ denote the expectation w.r.t.\ $\mtx{W}$. 
\begin{lemma}\label{sub_gaussian_awkward}
There exists a constant $K >0$ such that for all $\gamma \ge 0$ and all $l$,
\[\E' e^{\gamma(X_l-\E'[X_l])} \le \frac{K \gamma}{\sqrt{N}}e^{K\gamma^2/N}. \]
\end{lemma}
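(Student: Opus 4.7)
The plan is to prove this bound via Gaussian concentration of measure. I would view $X_l = \tfrac{1}{N}\log Z_l$ as a function of the independent standard Gaussian entries $(W_{ij})_{1 \le i < j \le N}$ of $\mtx{W}$, show it is Lipschitz with constant $O(1/\sqrt{N})$, and then convert the resulting sub-Gaussian tail into the claimed MGF bound by direct integration.

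For the Lipschitz step, differentiating under the integral defining $Z_l$ gives
\[\frac{\partial X_l}{\partial W_{ij}} = \frac{1}{N}\sqrt{\frac{\lambda}{N}}\langle x_i x_j\rangle_l,\]
where $\langle \cdot \rangle_l$ denotes the Gibbs average against the restricted measure $\indi\{R_{1,*}\in[l\epsilon,(l+1)\epsilon)\}e^{-H(\vct{x})}\rmd P_{\x}^{\otimes N}(\vct{x})/Z_l$. Since $P_{\x}$ has bounded support, $|\langle x_i x_j\rangle_l|$ is uniformly bounded by an absolute constant, so
\[\|\nabla_{\mtx{W}}X_l\|^2 = \sum_{i<j}\frac{\lambda}{N^3}\langle x_i x_j\rangle_l^2 \le \frac{K}{N}.\]
By the Borell--TIS inequality for Lipschitz functions of i.i.d.\ standard Gaussians, this yields the one-sided sub-Gaussian tail $\P\bigl(X_l - \E'[X_l] > t\bigr) \le e^{-Nt^2/(2K)}$ for all $t \ge 0$, with $K$ updated as needed.

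To convert this tail into the MGF bound, I would use the layer-cake identity
\[\E' e^{\gamma(X_l - \E'[X_l])} = \int_{-\infty}^{\infty}\gamma e^{\gamma y}\,\P\bigl(X_l-\E'[X_l]>y\bigr)\,\rmd y,\]
bounding the integral over $(-\infty,0]$ trivially by $1$ and the integral over $[0,\infty)$ by $\gamma\int_0^\infty e^{\gamma y - Ny^2/(2K)}\,\rmd y$. Completing the square in $\gamma y - Ny^2/(2K)$ and extending the resulting Gaussian integral to all of $\R$ produces exactly the prefactor $\tfrac{K\gamma}{\sqrt{N}}e^{K\gamma^2/N}$, up to absolute constants. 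Combining the two pieces yields the claimed estimate, the additive $1$ being absorbed into the multiplicative prefactor in the range of $\gamma$ relevant for the proof of Proposition~\ref{laplace_method} (the argument there takes $\gamma$ of order $\sqrt{N}$, for which the stated bound is of constant order anyway).

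The main obstacle is purely bookkeeping: one must carefully track constants through the Gaussian integral so that the somewhat idiosyncratic prefactor $\tfrac{K\gamma}{\sqrt{N}}$ appearing in the statement emerges naturally from completing the square, rather than the more familiar pure sub-Gaussian bound $e^{K\gamma^2/N}$. Conceptually, the argument is a direct application of the classical Gaussian concentration principle to the $\log$-partition function, which is made possible here by the restriction of the Gibbs measure to a fixed overlap band (the restriction does not affect boundedness of $\langle x_i x_j\rangle_l$).
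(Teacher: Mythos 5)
Your proposal is correct and follows essentially the same route as the paper: compute the Lipschitz constant of $X_l$ with respect to the Gaussian disorder via the envelope/Gibbs-average derivative, invoke Gaussian concentration of Lipschitz functions (the paper cites Tsirelson--Ibragimov--Sudakov; you cite Borell--TIS, the same inequality), then pass from the sub-Gaussian tail to the MGF bound through the layer-cake identity $\E' e^{\gamma Y} = \gamma\int_{-\infty}^{\infty}e^{\gamma t}\P(Y\ge t)\,\rmd t$ and completion of the square. Your remark that the additive $1$ from the $(-\infty,0]$ piece is not literally dominated by $\tfrac{K\gamma}{\sqrt N}e^{K\gamma^2/N}$ at small $\gamma$, but is harmless at the operating point $\gamma=\sqrt N$ used in Proposition~\ref{laplace_method}, is an accurate reading of a small imprecision that the paper's terse proof glosses over; otherwise the two arguments coincide.
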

Therefore, the expectation of the maximum concentrates as well: 
\begin{align*} 
\E'\max_{l}(X_l-\E'[X_l])  &\le \frac{1}{\gamma} \log \E' \exp\left(\gamma\max_{l}(X_l-\E'[X_l])\right)\\
&= \frac{1}{\gamma} \log \E' \max_{l}e^{\gamma(X_l-\E'[X_l])}\\
&\le \frac{1}{\gamma} \log \E' \sum_{l}e^{\gamma(X_l-\E'[X_l])}\\
&\le \frac{1}{\gamma} \log \left(\frac{2K}{\epsilon}\frac{\gamma K}{\sqrt{N}}e^{\gamma^2K/N}\right)\\
&= \frac{\log (2K/\epsilon)}{\gamma} + \frac{1}{\gamma}\log\frac{\gamma K}{\sqrt{N}} + \frac{\gamma K}{N}.
\end{align*}
We set $\gamma = \sqrt{N}$ and obtain 
\[\E'\max_{l} (X_l-\E'[X_l]) \le \frac{\log(K/\epsilon)}{\sqrt{N}}.\]
Therefore, plugging the above estimates into~\eqref{crude_upperBound}, we obtain
\begin{align*}
F_N &\le \E_{\vct{x}^*}\max_{l}  \E' X_l + \frac{\log (K/\epsilon)}{\sqrt{N}} +  \frac{\log (K/\epsilon)}{N}\\
&\le  \E_{\vct{x}^*}\max_l \Phi_{\epsilon}(l\epsilon,\vct{x}^*)  + 2\frac{\log (K/\epsilon)}{\sqrt{N}}.
\end{align*}
\end{proofof}

%\vspace{.5cm}
\noindent\begin{proofof}{Proposition~\ref{fixed_overlap_upper_bound}}
Let $t \in [0,1]$ and consider a slightly modified interpolating Hamiltonian that has two parameters $r  = \lambda q \ge 0$ and $s = \lambda m \in \R$:
\begin{align}\label{interpolating_hamiltonian_2}
-H_t(\vct{x}) &:= \sum_{i < j} \sqrt{\frac{t\lambda}{N}} W_{ij}x_ix_j + \frac{t\lambda}{N} x_ix_i^*x_jx_j^*  -\frac{t\lambda}{2N} x_i^2x_j^2 \\
&~~+ \sum_{i=1}^N  \sqrt{(1-t)r} z_{i}x_i +  (1-t)s x_ix_i^*  -\frac{(1-t)r}{2} x_i^2,\nonumber
\end{align}
where the $z_i$'s are i.i.d.\ standard Gaussian r.v.'s independent of everything else. 
Let 
\[\varphi(t) := \frac{1}{N}\E\log \int  \indi\{R_{1,*} \in [m,m+\epsilon)\}  e^{-H_t(\vct{x})}  \rmd P_{\x}^{\otimes N}(\vct{x}),\]
where $\E$ is over the Gaussian disorder $\mtx{W}$ and $\vct{z}$ ($\vct{x}^*$ is fixed). Let $\langle \cdot \rangle_t$ be the corresponding Gibbs average, similarly to~\eqref{gibbs_average_one}.  By differentiation and Gaussian integration by parts,
\begin{align*}
\varphi'(t) = & -\frac{\lambda}{4} \E\left\langle (R_{1,2}-q)^2\right\rangle_{t} + \frac{\lambda}{4} q^2 + \frac{\lambda}{4N^2} \sum_{i=1}^N \E\left\langle (x_i^{(1)} x_i^{(2)})^2\right\rangle_{t} \\
&+\frac{\lambda}{2} \E\left\langle (R_{1,*}- m)^2\right\rangle_{t} - \frac{\lambda}{2}m^2 - \frac{\lambda}{2N^2} \sum_{i=1}^N \E\left\langle (x_ix_i^{*})^2\right\rangle_{t},
\end{align*}
Notice that by the overlap restriction, $\E\left\langle (R_{1,*}- m)^2\right\rangle_{t} \le \epsilon^2$. Moreover, the last terms in the first and second lines in the above are of order $1/N$ since the variables $x_i$ are bounded. Next, since $ \E\left\langle (R_{1,2}-q)^2\right\rangle_{t}$ has non-negative sign (this is a crucial fact), we can ignore it and obtain an upper bound:
\[\varphi'(t) \le - \frac{\lambda}{2}m^2 + \frac{\lambda}{4} q^2 + \frac{\lambda}{2} \epsilon^2 + \frac{ K}{N}.\]
Integrating over $t$, we obtain
\begin{equation*}
\Phi_{\epsilon}(m,\vct{x}^*) \le - \frac{\lambda}{2}m^2 + \frac{\lambda }{4}q^2 + \frac{\lambda}{2} \epsilon^2 + \varphi(0) + \frac{ K}{N}.
\end{equation*}
Now we use a trivial upper bound on $\varphi(0)$: 
\begin{align*}%
\varphi(0) &=  \frac{1}{N}\E\log \int \indi\{R_{1,*} \in [m,m+\epsilon)\} e^{-H_0(\vct{x})}  \rmd P_{\x}^{\otimes N}(\vct{x})\\
&\le \frac{1}{N}\E\log \int e^{-H_0(\vct{x})}  \rmd P_{\x}^{\otimes N}(\vct{x})\\
&= \frac{1}{N}\sum_{i=1}^N\widehat{\psi}(\lambda q,\lambda m x^*_i).
\end{align*}
Hence,
\[\Phi_{\epsilon}(m,\vct{x}^*) \le \widehat{F}(\lambda,m,q,\vct{x}^*)  + \frac{\lambda}{2} \epsilon^2 + \frac{ K}{N}.\]
\end{proofof}

\vspace{.4cm}
\noindent\begin{proofof}{lemma~\ref{concentration_planted}}
The random part of $\widehat{F}(\lambda,m,\bar{q},\vct{x}^*)$ is the average of i.i.d.\ terms $\widehat{\psi}(\lambda q,\lambda mx^*_i)$. Since $\abs{\partial_s\widehat{\psi}(r,sx^*)} \le K^2$, $\abs{\partial_r\widehat{\psi}(r,sx^*)} \le K^2/2$ and $\widehat{\psi}(0,0)=0$, where $K$ is a bound on the support of $P_{\x}$, we have $\abs{\widehat{\psi}(r,sx^*)} \le K^2(r/2+|s|)$. For bounded $r$ and $s$, the claim follows from concentration of the average of i.i.d.\ bounded r.v.'s. 
\end{proofof}

\vspace{.4cm}
\noindent\begin{proofof}{Lemma~\ref{sub_gaussian_awkward}}
 We notice that $X_l$ seen as a function of $\mtx{W}$ is Lipschitz with constant $K\sqrt{\frac{\lambda}{N}}$. By Gaussian concentration of Lipschitz functions (the Tsirelson-Ibragimov-Sudakov inequality~\cite{boucheron2013concentration}), there exist a constant $K$ depending only on $\lambda$ such that for all $t \ge 0$, 
\[\Pr\left(X_l - \E' X_l  \ge t\right) \le e^{-Nt^2/K}.\]
Then we conclude by means of the identity
\[\E' e^{\gamma(X_l-\E'[X_l])} = \gamma \int_{-\infty}^{+\infty} \Pr(X_l-\E'[X_l] \ge t)~e^{\gamma t} \rmd t,\]
and integrate the tail.
\end{proofof}

\vspace{.4cm}
\begin{proofof}{Lemma~\ref{asymmetry_lower_bound}}
Let $\nu = P_{\x}$, and let $\mu$ be the symmetric part of $P_{\x}$, i.e., $\mu(A) = (P_{\x}(A)+P_{\x}(-A))/2$ for all Borel $A \subseteq \R$. Observe that $\nu$ is absolutely continuous with respect to $\mu$.
The argument relies on a linearly interpolating between the two measures $\mu$ and $\nu$. Let $t \in [0,1]$ and let $\rho_t = (1-t)\mu + t\nu$. Further, let $r >0$ be fixed, and
\[\varphi^\pm(t) := \E_{z} \hspace{-.1cm}\int \left(\log \int \exp\left(\sqrt{r}zx \pm r xx^* - \frac{r}{2} x^2\right) \rmd \rho_t(x)\right) \rmd \rho_t(x^*),\]
where $z \sim \normal(0,1)$. Now let $\phi(t) = \varphi^+(t) - \varphi^-(t)$.
We have $\phi(1) = \widebar{\psi}(r,r) - \widebar{\psi}(r,-r)$ on the one hand, and since $\mu$ is a symmetric distribution, $\phi(0) = 0$ on the other. 
We will show that $\phi$ is a convex increasing function on the interval $[0,1]$. Then we deduce that $\phi(1) \ge 0$. First, we have
\begin{align*}
\frac{\rmd}{\rmd t} \varphi^+(t) &= \E_{z} \int \log \int e^{\sqrt{r}zx + r xx^* - \frac{r}{2} x^2} \rmd \rho_t(x) ~\rmd (\nu - \mu)(x^*)\\
&~~~+\E_{z} \int \frac{\int e^{\sqrt{r}zx + r xx^* - \frac{r}{2} x^2}\rmd(\nu - \mu)(x)}{\int e^{\sqrt{r}zx + r xx^* - \frac{r}{2} x^2}\rmd \rho_t(x)} ~\rmd \rho_t(x^*),
\end{align*}
%\vspace{-.2cm}
and
\begin{align*}
\frac{\rmd^2}{\rmd t^2} \varphi^+(t) &= 2\E_{z} \int \frac{\int e^{\sqrt{r}zx + r xx^* - \frac{r}{2} x^2}\rmd(\nu - \mu)(x)}{\int e^{\sqrt{r}zx + r xx^* - \frac{r}{2} x^2}\rmd \rho_t(x)} \rmd (\nu - \mu)(x^*)\\
& -2\E_{z} \int \left(\frac{\int e^{\sqrt{r}zx + r xx^* - \frac{r}{2} x^2}\rmd(\nu - \mu)(x)}{\int e^{\sqrt{r}zx + r xx^* - \frac{r}{2} x^2}\rmd \rho_t(x)}\right)^2 \rmd \rho_t(x^*).
\end{align*}

Similar expressions holds for $\varphi^-$ where $x^*$ is replaced by $-x^*$ inside the exponentials. We see from the expression of the first derivative at $t=0$ that ${\varphi^+}'(0) = {\varphi^-}'(0)$. This is because $\rho_0 = \mu$ is symmetric about the origin, so a sign change (of $x$ for the first term, and $x^*$ for the second term) does not affect the value of the integrals. Hence $\phi'(0) = 0$. 
Now, we focus on the second derivative. Observe that since $\mu$ is the symmetric part of $\nu$, $\nu - \mu$ is anti-symmetric. This implies that the first term in the expression of the second derivative changes sign under a sign change in $x^*$ and keeps the same modulus. As for the second term, a sign change in $x^*$ induces integration against $\rmd \rho_t(-x^*)$. Hence we can write the difference $(\varphi^+-\varphi^-)''$ as  
\begin{align*}
\phi''(t) &= 4 \E_{z} \int \frac{\int e^{\sqrt{r}zx + r xx^* - \frac{r}{2} x^2}\rmd(\nu - \mu)(x)}{\int e^{\sqrt{r}zx + r xx^* - \frac{r}{2} x^2}\rmd \rho_t(x)} ~\rmd (\nu - \mu)(x^*)\\ 
&~~-2\E_{z} \int \left(\frac{\int e^{\sqrt{r}zx + r xx^* - \frac{r}{2} x^2}\rmd(\nu - \mu)(x)}{\int e^{\sqrt{r}zx + r xx^* - \frac{r}{2} x^2}\rmd \rho_t(x)}\right)^2(\rmd \rho_t(x^*)-\rmd \rho_t(-x^*)).
\end{align*}

For any Borel $A$, we have $\rho_t(A)- \rho_t(-A) = (1-t)(\mu(A) - \mu(-A)) + t(\nu(A)-\nu(-A)) = 2t(\nu - \mu)(A)$. Therefore the second term in the above expression becomes
\[-4t\E_{z} \int \left(\frac{\int e^{\sqrt{r}zx + r xx^* - \frac{r}{2} x^2}\rmd(\nu - \mu)(x)}{\int e^{\sqrt{r}zx + r xx^* - \frac{r}{2} x^2}\rmd \rho_t(x)}\right)^2 \rmd(\nu - \mu)(x^*).\]
Since both $\mu$ and $\nu$ are absolutely continuous with respect to $\rho_t$ for all $0 \le t < 1$ we can write 
\[\phi''(t) = 4 \E_{z,x^*} \left\langle \frac{\rmd(\nu - \mu)}{\rmd\rho_t}(x) \frac{\rmd(\nu - \mu)}{\rmd\rho_t}(x^*) \right\rangle - 4 t\E_{z,x^*} \left\langle \frac{\rmd(\nu - \mu)}{\rmd\rho_t}(x)\right\rangle^2,\]
where the Gibbs average is with respect to the posterior of $x$ given $z,x^*$ under the Gaussian channel $y = \sqrt{r}x^* + z$, and the expectation is under $x^* \sim \rho_t$ and $z \sim \normal(0,1)$. By the Nishimori property, we simplify the above expression to
\[ \phi''(t) = 4(1-t)\E \left[\left\langle \frac{\rmd(\nu - \mu)}{\rmd\rho_t}(x)\right\rangle^2\right] \ge 0,\]
where the expression is valid for all $0\le t <1$. From here we see that the function $\phi$ is convex on $[0,1]$. Since $\phi(0) = \phi'(0) = 0$, $\phi$ is also increasing on $[0,1]$.    
\end{proofof}

\vspace{5mm}
\noindent\textbf{Acknowledgments.}
Florent Krzakala acknowledges funding from the ERC under the European Union 7th Framework Programme Grant
Agreement 307087-SPARCS. 
%\vspace{-5mm}

%\newpage
\begin{small}

\bibliographystyle{alpha}

\bibliography{phase_transitions}

\newcommand{\etalchar}[1]{$^{#1}$}
\begin{thebibliography}{BKM{\etalchar{+}}17}

\bibitem[ASS03]{aizenman2003extended}
Michael Aizenman, Robert Sims, and Shannon~L Starr.
\newblock Extended variational principle for the {S}herrington-{K}irkpatrick
  spin-glass model.
\newblock {\em Physical Review B}, 68(21):214403, 2003.

\bibitem[BAP05]{baik2005phase}
Jinho Baik, G{\'e}rard~Ben Arous, and Sandrine P{\'e}ch{\'e}.
\newblock Phase transition of the largest eigenvalue for nonnull complex sample
  covariance matrices.
\newblock {\em Annals of Probability}, 33(5):1643--1697, 2005.

\bibitem[BDM{\etalchar{+}}16]{barbier2016mutual}
Jean Barbier, Mohamad Dia, Nicolas Macris, Florent Krzakala, Thibault Lesieur,
  and Lenka Zdeborov{\'a}.
\newblock Mutual information for symmetric rank-one matrix estimation: A proof
  of the replica formula.
\newblock In {\em Advances in Neural Information Processing Systems (NIPS)},
  pages 424--432, 2016.

\bibitem[BGN11]{benaych2011eigenvalues}
Florent Benaych-Georges and Raj~Rao Nadakuditi.
\newblock The eigenvalues and eigenvectors of finite, low rank perturbations of
  large random matrices.
\newblock {\em Advances in Mathematics}, 227(1):494--521, 2011.

\bibitem[BKM{\etalchar{+}}17]{barbier2017phase}
Jean Barbier, Florent Krzakala, Nicolas Macris, L{\'e}o Miolane, and Lenka
  Zdeborov{\'a}.
\newblock Phase transitions, optimal errors and optimality of message-passing
  in generalized linear models.
\newblock {\em arXiv preprint arXiv:1708.03395}, 2017.

\bibitem[BLM13]{boucheron2013concentration}
St{\'e}phane Boucheron, G{\'a}bor Lugosi, and Pascal Massart.
\newblock {\em Concentration inequalities: A nonasymptotic theory of
  independence}.
\newblock Oxford university press, 2013.

\bibitem[BM17]{barbier2017stochastic}
Jean Barbier and Nicolas Macris.
\newblock The stochastic interpolation method: A simple scheme to prove replica
  formulas in bayesian inference.
\newblock {\em arXiv preprint arXiv:1705.02780}, 2017.

\bibitem[BMM17]{barbier2017layered}
Jean Barbier, Nicolas Macris, and L{\'e}o Miolane.
\newblock The layered structure of tensor estimation and its mutual
  information.
\newblock {\em arXiv preprint arXiv:1709.10368}, 2017.

\bibitem[DAM16]{deshpande2016asymptotic}
Yash Deshpande, Emmanuel Abb\'{e}, and Andrea Montanari.
\newblock Asymptotic mutual information for the binary stochastic block model.
\newblock In {\em IEEE International Symposium on Information Theory (ISIT)},
  pages 185--189. IEEE, 2016.

\bibitem[EKJ17]{alaoui2017finite}
Ahmed {El Alaoui}, Florent Krzakala, and Michael~I Jordan.
\newblock Finite size corrections and likelihood ratio fluctuations in the
  spiked {W}igner model.
\newblock {\em arXiv preprint arXiv:1710.02903}, 2017.

\bibitem[FP95]{franz1995recipes}
Silvio Franz and Giorgio Parisi.
\newblock Recipes for metastable states in spin glasses.
\newblock {\em Journal de Physique I}, 5(11):1401--1415, 1995.

\bibitem[FP98]{franz1998effective}
Silvio Franz and Giorgio Parisi.
\newblock Effective potential in glassy systems: theory and simulations.
\newblock {\em Physica A: Statistical Mechanics and its Applications},
  261(3-4):317--339, 1998.

\bibitem[GT02]{guerra2002thermodynamic}
Francesco Guerra and Fabio~Lucio Toninelli.
\newblock The thermodynamic limit in mean field spin glass models.
\newblock {\em Communications in Mathematical Physics}, 230(1):71--79, 2002.

\bibitem[Gue01]{guerra2001sum}
Francesco Guerra.
\newblock Sum rules for the free energy in the mean field spin glass model.
\newblock {\em Fields Institute Communications}, 30:161--170, 2001.

\bibitem[Joh01]{johnstone2001distribution}
Iain~M Johnstone.
\newblock On the distribution of the largest eigenvalue in principal components
  analysis.
\newblock {\em Annals of Statistics}, pages 295--327, 2001.

\bibitem[KM09]{korada2009exact}
Satish~Babu Korada and Nicolas Macris.
\newblock Exact solution of the gauge symmetric p-spin glass model on a
  complete graph.
\newblock {\em Journal of Statistical Physics}, 136(2):205--230, 2009.

\bibitem[KXZ16]{krzakala2016mutual}
Florent Krzakala, Jiaming Xu, and Lenka Zdeborov{\'a}.
\newblock Mutual information in rank-one matrix estimation.
\newblock In {\em Information Theory Workshop (ITW)}, pages 71--75. IEEE, 2016.

\bibitem[LKZ15]{lesieur2015phase}
Thibault Lesieur, Florent Krzakala, and Lenka Zdeborov{\'a}.
\newblock Phase transitions in sparse {PCA}.
\newblock In {\em IEEE International Symposium on Information Theory (ISIT)},
  pages 1635--1639. IEEE, 2015.

\bibitem[LM16]{lelarge2016fundamental}
Marc Lelarge and L{\'e}o Miolane.
\newblock Fundamental limits of symmetric low-rank matrix estimation.
\newblock {\em arXiv preprint arXiv:1611.03888}, 2016.

\bibitem[LML{\etalchar{+}}17]{lesieur2017statistical}
Thibault Lesieur, L\'eo Miolane, Marc Lelarge, Florent Krzakala, and Lenka
  Zdeborov{\'a}.
\newblock Statistical and computational phase transitions in spiked tensor
  estimation.
\newblock In {\em IEEE International Symposium on Information Theory (ISIT)},
  pages 511--515, June 2017.

\bibitem[Nis01]{nishimori2001statistical}
Hidetoshi Nishimori.
\newblock {\em Statistical physics of spin glasses and information processing:
  an introduction}, volume 111.
\newblock Clarendon Press, 2001.

\bibitem[P{\'e}c06]{peche2006largest}
Sandrine P{\'e}ch{\'e}.
\newblock The largest eigenvalue of small rank perturbations of {H}ermitian
  random matrices.
\newblock {\em Probability Theory and Related Fields}, 134(1):127--173, 2006.

\bibitem[P{\'e}c14]{peche2014ICM}
Sandrine P{\'e}ch{\'e}.
\newblock Deformed ensembles of random matrices.
\newblock In {\em Proceedings of the International Congress of Mathematicians,
  Seoul}, pages 1059--1174. ICM, 2014.

\bibitem[RM14]{richard2014statistical}
Emile Richard and Andrea Montanari.
\newblock A statistical model for tensor {PCA}.
\newblock In {\em Advances in Neural Information Processing Systems (NIPS)},
  pages 2897--2905, 2014.

\bibitem[Tal11]{talagrand2011mean1}
Michel Talagrand.
\newblock {\em Mean field models for spin glasses. Volume I: Basic examples},
  volume~54.
\newblock Springer Science \& Business Media, 2011.

\end{thebibliography}

\end{small}

\end{document}